\theoremstyle{plain}
\def\beq{\begin{equation}}   
\def\eeq{\end{equation}}   
\def\beqn{\begin{eqnarray}}
\def\eeqn{\end{eqnarray}}
\def\utilde#1{\mathord{\vtop{\ialign{##\crcr   
$\hfil\displaystyle{#1}\hfil$\crcr\noalign{\kern1.5pt\nointerlineskip}   
$\hfil {}\hfil$\crcr\noalign{\kern1.5pt}}}}}   
\theoremstyle{definition}   
\newtheorem{theorem}{Theorem}[section]
\newtheorem*{proposition*}{Proposition}   
\theoremstyle{remark}   
\newtheorem*{remark}{Remark}
\newtheorem*{example*}{\bf Example}   
\newtheorem{example}{\bf Example}
\theoremstyle{definition}   
\newtheorem{definition}{Definition}
\def\var{\varepsilon}
\def\R{\mathbb R}
\def\to{\rightarrow}   
\def\dlim[#1][#2]{\lim_{#1 \to #2, #1 \neq #2}}
\theoremstyle{plain}
\def\beq{\begin{equation}}   
\def\eeq{\end{equation}}   
\def\beqn{\begin{eqnarray}}
\def\eeqn{\end{eqnarray}}
\def\utilde#1{\mathord{\vtop{\ialign{##\crcr   
$\hfil\displaystyle{#1}\hfil$\crcr\noalign{\kern1.5pt\nointerlineskip}   
$\hfil {}\hfil$\crcr\noalign{\kern1.5pt}}}}}
\def\var{\text{Var}}
\def\L{{\cal L}}
\def\R{\mathbb R}
\def\to{\rightarrow}   
\def\dlim[#1][#2]{\lim_{#1 \to #2, #1 \neq #2}}
\DeclareMathOperator*{\argmax}{arg\,max}             
\def\ps@pprintTitle{%
  \let\@oddhead\@empty
  \let\@evenhead\@empty
  \let\@oddfoot\@empty
  \let\@evenfoot\@oddfoot
}
\begin{document}

\begin{frontmatter}

\title{Bayesian Information Criterion for Linear Mixed-effects Models}

\author[1]{Nan Shen\corref{cor1}%
}
\ead{nli2@niu.edu}
\author[1]{Bárbara González}
\ead{bgonzalez4@niu.edu}

\cortext[cor1]{Corresponding author}

\address[1]{Northern Illinois University, Department of Statistics and Actuarial Science, 1425 Lincoln Hwy, DeKalb, IL 60115}

\begin{abstract}
The use of Bayesian information criterion (BIC) in the model selection procedure is under the assumption that the observations are independent and identically distributed (i.i.d.). However, in practice, we do not always have i.i.d. samples. For example, clustered observations tend to be more similar within the same group, and longitudinal data is collected by measuring the same subject repeatedly.  In these scenarios, the assumption in BIC is not satisfied. The concept of effective sample size is brought up and improved BIC is defined by replacing the sample size in the original BIC expression with the effective sample size, which will give us a better theoretical foundation in the circumstance that mixed-effects models involve. Numerical experiment results are also given by comparing the performance of our new BIC with other widely used BICs. 

\end{abstract}

\begin{keyword}
Bayesian information criterion \sep effective sample size \sep linear mixed-effects models
\end{keyword}

\end{frontmatter}

\section{Introduction}

\subsection{Understanding BIC and Bayesian model selection}
BIC is one of the approximation methods of Bayes factor as brought out by Schwarz in 1978 \cite{chiwazi}. Due to the computational difficulties, usually, exact Bayes factor is not used directly in model selection. Alternatively,  Bayes factor is always approximated by BIC or a variant BIC \cite{ray} using Laplace method for approximation integrals. The deduction of the expression of BIC is under the assumption that the observations $y_1, y_2, \cdots, y_n$ are i.i.d. in which the Hessian matrix becomes the observed Fisher information matrix. However, when there is non-independence in the data, we cannot use BIC directly. Our research generalized the use of BIC in the circumstances that the linear mixed-effects models involve.

\subsubsection{Bayes Factor}
Bayes factor plays a core role in Bayesian model comparison. It determines how far one collection of information should alter one's level of belief in one model versus another. In using Bayes factors, it is essential to calculate the marginal likelihood of two models.

Consider the circumstances that we are doing the model comparison. Suppose we have a pool of models that we would use to describe a given phenomenon. In other words, we want to find out which of them fits the data best. Thinking probabilistically about this. One way to frame the problem would be to calculate the posterior probability that we ascribe to model 1 ($M_1$) conditional to the data that we collect, and we can compare that with the posterior probability of model 2 $(M_2)$ given our data. This is just the circumstance where we've got two models that we want to compare. How could we calculate either of these terms? It is not difficult if we realize that each of these terms is essentially the posterior distribution. 

The model comparison problem we mentioned here could be formulated as
$$P(M_1|{\bf y}) \text{ vs } P(M_2|{\bf y})$$
We denote the posterior probability of model $M_1$ as $P(M_1|{\bf y})$, where ${\bf y}$ is a given collection of data.

By the classic Bayes rule, we have that
\begin{equation}
\label{eq:bayes1}
P(M_1|{\bf y})=\frac{P({\bf y}|M_1)P(M_1)}{P({\bf y})}\end{equation}
where we call $P({\bf y}|M_1)$ the marginal likelihood of model $M_1$, it describes the probability of getting data ${\bf y}$ under the assumptions of model $M_1$. $P(M_1)$ is the prior probability of $M_1$. In circumstance where we have two models that we are choosing between, $P(M_1)$ is usually just $1-P(M_2)$. Finally, the denominator $P({\bf y})$ is also a marginal likelihood, except now it is the marginal likelihood over both models.

To calculate each of these terms in detail, let's start with the marginal likelihood of model $M_1$, $P({\bf y}|M_1)$. Here we use another application of Bayes rule. In traditional Bayesian inference, we are trying to find out the posterior probability, or the probability density of some parameter vector $\boldsymbol{\theta}_i$ in the model $M_i$ conditional on data {\bf y}.
\begin{equation}
\label{eq:bayes2}
P(\boldsymbol{\theta}_i|{\bf y})=\frac{P({\bf y}|\boldsymbol{\theta}_i)P(\boldsymbol{\theta}_i)}{P({\bf y})}\end{equation}
By writing down our Bayes rule for inference implicitly, we are typically conditioning on a single model. So we could write as
\begin{equation}
\label{eq:bayes3}
P(\boldsymbol{\theta}_1|{\bf y},M_1)=\frac{P({\bf y}|\boldsymbol{\theta}_1,M_1)P(\boldsymbol{\theta}_1|M_1)}{P({\bf y}|M_1)}\end{equation}
Then we can see that the denominator in \ref{eq:bayes3}, $P({\bf y}|M_1)$, is that we are trying to calculate. Then we could calculate $P({\bf y}|M_1)$ by integrating out a continuous parameter vector, or summing for a discrete one, about the numerator of \ref{eq:bayes3} as
\begin{equation}
\label{eq:1}P({\bf y}|M_1) = \int P({\bf y}|\boldsymbol{\theta}_1,M_1)P(\boldsymbol{\theta}_1|M_1) d \boldsymbol{\theta}_1\end{equation}
And essentially here by integrating with respect to $\boldsymbol{\theta}_1$, what we are doing is marginalizing out our $\boldsymbol{\theta}_1$ dependence from our numerator, which is why we get a marginal density on bottom of \ref{eq:bayes3} which doesn't depend on $\boldsymbol{\theta}_1$. Note that, $\boldsymbol{\theta}_1$ is a parameter vector, in other words, we've got a model with a lot of parameters. Then \ref{eq:1} will be a high dimensional integral. That is some of the difficulties behind calculating the marginal likelihoods.

About how to calculate the denominator of \ref{eq:bayes1}, the idea is that it is the marginal likelihood of our data across both models, i.e. marginalized over the model choice. So
\begin{equation}
\label{eq:2}
P({\bf y})=P({\bf y}|M_1)P(M_1)+P({\bf y}|M_2)P(M_2)
\end{equation}
Importantly, this denominator term, $P({\bf y})$, is the same whether we're working out $P(M_1|{\bf y})$ or $P(M_2|{\bf y})$, because it contains contributions from each of the models. So the odds for $M_1$ verses $M_2$ given data then is 
$$\frac{P(M_1|{\bf y})}{P(M_2|{\bf y})}=\frac{P({\bf y}|M_1)}{P({\bf y}|M_2)}\times \frac{P(M_1)}{P(M_2)}$$
The first term on the right is called the Bayes factor (BF). So the Bayes factor is defined as the ratio of the marginal likelihoods for the two models that we are comparing:
$$\text{BF}=\frac{P({\bf y}|M_1)}{P({\bf y}|M_2)}$$
\begin{table}
\begin{center}
\caption{Harold Jeffreys' scale for interpretation of Bayes factor}
\label{tbl:jeff}
\setlength{\tabcolsep}{0.5 em} 
{\renewcommand{\arraystretch}{1.5}
   \begin{tabular}{|c|c|} \hline
   BF & Strength of evidence for $M_1$ \cr \hline  \hline
   $< 10^0$                    & Negative                  \cr \hline
   $10^0$     to $10^{1/2}$    & Barely worth mentioning   \cr \hline
   $10^{1/2}$ to $10^1$        & Substantial               \cr \hline
   $10^1$     to $10^{3/2}$    & Strong                    \cr \hline
   $10^{3/2}$ to $10^2$        & Very strong               \cr \hline
   $> 10^2$                    & Decisive                  \cr \hline
   \end{tabular}
}
\end{center}
\end{table}
Harold Jeffreys gave a widely used scale for interpretation of BF \cite{jeff} as shown in table \ref{tbl:jeff}.

There are some issues with using this kind of framework, particularly using the odds $P(M_1|{\bf y})/P(M_2|{\bf y})$, to compare between models:
\begin{itemize}
    \item Difficulties in calculating the marginal likelihoods $P({\bf y}|M_i)$ not only because they are inherently a high dimensional integration or sum but also how the two terms that we are integrating or summing over interact with one another, which makes this integration pathological to calculate.
    
    \item Marginal likelihoods $P({\bf y}|M_i)$ are very sensitive to our choice of the prior for each of the models in Bayes rule for inference, i.e. to $P(\boldsymbol{\theta}_i|M_i)$. Marginal likelihoods could change significantly as we change our prior on parameters $\boldsymbol{\theta}_i$, even if those changes to the prior on parameters do not affect the posterior $P(M_i|{\bf y})$ particularly much. This kind of sensitivity is not preferable for a model comparison framework, since we don't want to change the prior $P(\boldsymbol{\theta}_i|M_i)$ if it does not affect our posterior $P(M_i|{\bf y})$. 
    
    \item In practice it is very hard to come up with sensible ways to ascribing prior probabilities $P(M_i)$. Particularly when you consider that comparing $M_1$, which is relatively a simple model with few parameters, with $M_2$, which is a relatively complex model. Surely in this case, we might want to assign less probability to $M_2$ than $M_1$. But exactly how much less probability we should give it? 
    \item Even if we do what a lot of people do, they just simply set the ratio $P(M_1)/P(M_2)=1$, there are still issues using the Bayes Factor to do the model comparison. Like, say what is the cutoff I prefer $M_1$ over $M_2$? Is BF$=1.00001$ makes a difference to BF$=1$?
    
\end{itemize}

In \cite{andrew1}, Andrew Gelman says that the correct way or a better way to do model comparison is via measures of predictive accuracy. These are things like Widely Applicable Information Criterion (WAIC) \cite{waic} or leave-one-out cross-validation (LOO-CV). This idea provides a much more nuanced way of doing model comparison because you could select your cross-validation data set to echo the eventual use of your model as opposed to Bayes factor framework which is very rigid in the way you do model comparison. 

\subsubsection{Approximation methods for Bayes factor}
Due to the computational difficulties and other issues we mentioned in the previous section, usually exact Bayes factor is not used directly by scientists in their research. Alternatively, it is always approximated by Bayesian information criterion (BIC), or Schwarz information criterion (also SIC, SBC, SBIC) as brought out by Schwarz in 1978 \cite{chiwazi}, or a variant BIC \cite{ray}. In this section, we will give a derivation of BIC using Laplace method for approximation integrals.

\begin{definition}
\label{bicdefinition}
The Bayesian information criterion (BIC) for candidate model $M$ is defined as
\begin{equation}
    \text{BIC}=-2\log \L(\hat{\boldsymbol{\theta}}|{\bf y})+p\log n
\end{equation}
where $\hat{\boldsymbol{\theta}}$ is the  maximum likelihood estimate (MLE) of $\boldsymbol{\theta}$ that maximize the likelihood function $\L(\boldsymbol{\theta}|{\bf y})$, $p$ is the number of parameters in the model, i.e. the dimension of $\boldsymbol{\theta}$, $|\boldsymbol{\theta}|$, and $n$ is the number of observations, i.e. $|{\bf y}|$.
\end{definition}
In practice, BIC is computed for each of the candidate models, and the model with the smallest BIC is selected as the best model. However, Schwarz's BIC was only justified under the assumption of independent, identically distributed (i.i.d.) observations, and only under linear models with the likelihood is from the regular exponential family \cite{chiwazi}. These limitations are the motivation of our research. We generalized the original BIC to the mixed-effects models in which the observations are correlated within the subjects and to other more general models. We will talk about this in detail in next section.

\begin{theorem}
\label{bic}
The log marginal likelihood $P({\bf y}|M)$ for model $M$ could be approximated as
\begin{equation}
    \log P({\bf y}|M) \approx \log \L(\hat{\boldsymbol{\theta}}|{\bf y})-\frac{p}{2}\log n
\end{equation}
where $\hat{\boldsymbol{\theta}}$ is the  MLE of $\boldsymbol{\theta}$ that maximize the likelihood function $\L(\boldsymbol{\theta}|{\bf y})$.
\end{theorem}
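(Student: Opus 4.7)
The plan is to apply the Laplace approximation to the integral representation of the marginal likelihood given in (\ref{eq:1}), namely
$$P({\bf y}|M) = \int \L(\boldsymbol{\theta}|{\bf y}) P(\boldsymbol{\theta}|M)\, d\boldsymbol{\theta} = \int \exp\bigl\{g(\boldsymbol{\theta})\bigr\}\, d\boldsymbol{\theta},$$
where I set $g(\boldsymbol{\theta}) := \log \L(\boldsymbol{\theta}|{\bf y}) + \log P(\boldsymbol{\theta}|M)$. First I would argue that the integrand is sharply peaked around $\hat{\boldsymbol{\theta}}$ for large $n$: the log-likelihood grows linearly in $n$, so its gradient dominates that of $\log P(\boldsymbol{\theta}|M)$, and the maximizer of $g$ agrees with the MLE $\hat{\boldsymbol{\theta}}$ up to a term vanishing as $n\to\infty$. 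This justifies centering the expansion at $\hat{\boldsymbol{\theta}}$.

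Next I would perform a second-order Taylor expansion of $g$ about $\hat{\boldsymbol{\theta}}$. The linear term drops because $\nabla \log \L(\hat{\boldsymbol{\theta}}|{\bf y}) = 0$, and the residual gradient from the prior contributes only $O(1)$. The quadratic part yields
$$g(\boldsymbol{\theta}) \;\approx\; g(\hat{\boldsymbol{\theta}}) - \tfrac{1}{2}(\boldsymbol{\theta}-\hat{\boldsymbol{\theta}})^{\top} H(\hat{\boldsymbol{\theta}})(\boldsymbol{\theta}-\hat{\boldsymbol{\theta}}),$$
with $H(\hat{\boldsymbol{\theta}}) = -\nabla^{2}\log \L(\hat{\boldsymbol{\theta}}|{\bf y})$ the observed Fisher information (again, the prior's Hessian contributes only $O(1)$). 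Plugging this expansion back into the integral, I would evaluate the resulting Gaussian integral to obtain
$$\log P({\bf y}|M) \;\approx\; \log \L(\hat{\boldsymbol{\theta}}|{\bf y}) + \log P(\hat{\boldsymbol{\theta}}|M) + \tfrac{p}{2}\log(2\pi) - \tfrac{1}{2}\log\bigl|H(\hat{\boldsymbol{\theta}})\bigr|.$$

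The final step, and the \emph{key place where the i.i.d.\ assumption enters}, is the asymptotic analysis of $\tfrac{1}{2}\log|H(\hat{\boldsymbol{\theta}})|$. Under i.i.d.\ sampling, the observed information decomposes as a sum of $n$ per-observation contributions, so $H(\hat{\boldsymbol{\theta}}) = n\,\bar{I}(\hat{\boldsymbol{\theta}})$ where $\bar{I}$ is the per-observation (unit) Fisher information converging to a fixed positive-definite matrix. Consequently
$$\tfrac{1}{2}\log\bigl|H(\hat{\boldsymbol{\theta}})\bigr| = \tfrac{p}{2}\log n + \tfrac{1}{2}\log\bigl|\bar{I}(\hat{\boldsymbol{\theta}})\bigr| = \tfrac{p}{2}\log n + O(1).$$
All remaining terms ($\log P(\hat{\boldsymbol{\theta}}|M)$, $\tfrac{p}{2}\log(2\pi)$, $\tfrac{1}{2}\log|\bar{I}|$) are $O(1)$ as $n\to\infty$ and are absorbed, leaving the stated approximation.

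The main obstacle I anticipate is not the mechanical Gaussian integration but rather \textbf{justifying the factorization $H = n\bar{I}$}: this is precisely the step that fails in the mixed-effects setting that motivates the paper, since correlated observations within a subject make the observed information scale at a rate strictly smaller than $n$. Making that dependence rigorous — specifying which eigenvalues of $H$ grow at which rates — is what later sections will have to replace with an effective-sample-size argument; here, however, I would simply invoke the i.i.d.\ assumption and the regularity conditions needed for consistency of the MLE to push through the bound.
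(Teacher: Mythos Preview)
Your proposal is correct and follows essentially the same route as the paper: Laplace approximation of the marginal-likelihood integral, a second-order Taylor expansion yielding the Gaussian integral and the $-\tfrac{1}{2}\log|H|$ term, and then the i.i.d.\ factorization $H = n\,\bar I$ (the paper invokes the weak law of large numbers here) to extract $\tfrac{p}{2}\log n$ while discarding the $O(1)$ pieces. The only cosmetic difference is that the paper simplifies by taking a flat prior $g(\boldsymbol{\theta})=1$ so the posterior mode coincides exactly with the MLE, whereas you center at the MLE and argue the prior's gradient and Hessian are $O(1)$; both lead to the same conclusion.
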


The basic idea in the proof is the Laplace's method for approximating an integral. Suppose the function $f(x)$ is a twice continuously differentiable function on $[a,b]$ with a unique global maximum at $x_0 \in (a,b)$, additionally $f''(x_0)<0$. Then
$$\int^b_a e^{\lambda f(x)} \approx \sqrt{\frac{2\pi}{\lambda |f''(x_0)|}}e^{\lambda f(x_0)} \text{ as }\lambda \to \infty.$$
The detailed proof of Theorem \ref{bic} is as follows.
\begin{proof}
From \ref{eq:1} we know that the marginal likelihood of model $M$ could be written as
\begin{equation}
\label{pyM}
\begin{aligned}
    P({\bf y}|M) & =\int f({\bf y}|\boldsymbol{\theta}, M)g(\boldsymbol{\theta}|M)d \boldsymbol{\theta} \xlongequal{\text{short as}} \int f({\bf y}|\boldsymbol{\theta})g(\boldsymbol{\theta})d \boldsymbol{\theta}\\
    & = \int \exp\left\{\log \left[f({\bf y}|\boldsymbol{\theta})g(\boldsymbol{\theta})\right]\right\}d \boldsymbol{\theta}
\end{aligned}
\end{equation}
where $f({\bf y}|\boldsymbol{\theta})$ is the density of the data given the parameters $\boldsymbol{\theta}$ and $g(\boldsymbol{\theta})$ is the prior density of the parameters $\boldsymbol{\theta}$ under model $M$.

Define $\tilde{\boldsymbol{\theta}}$ as the mode of the posterior distribution $h(\boldsymbol{\theta}|{\bf y})$, at where $f({\bf y}|\boldsymbol{\theta})g(\boldsymbol{\theta})$ attains its maximum then $\log \left[f({\bf y}|\boldsymbol{\theta})g(\boldsymbol{\theta})\right]$ attains its maximum also. 
$$\tilde{\boldsymbol{\theta}} = \argmax_{\boldsymbol{\theta}} h(\boldsymbol{\theta}|{\bf y})=\argmax_{\boldsymbol{\theta}} \frac{f({\bf y}|\boldsymbol{\theta})g(\boldsymbol{\theta})}{\int f({\bf y}|\boldsymbol{\theta})g(\boldsymbol{\theta})d \boldsymbol{\theta}}$$
\begin{remark}
When we use the noninformative prior, for example let $g(\boldsymbol{\theta})=1$, then $\L(\boldsymbol{\theta}|{\bf y})=f({\bf y}|\boldsymbol{\theta})$ attains its maximum at the MLE of $\boldsymbol{\theta}$, i.e.  $\tilde{\boldsymbol{\theta}}=\hat{\boldsymbol{\theta}}$. To simplify the notation, we will use $\hat{\boldsymbol{\theta}}$ as our posterior mode.
\end{remark}
We can now expand $Q=\log \left[f({\bf y}|\boldsymbol{\theta})g(\boldsymbol{\theta})\right]$ about $\hat{\boldsymbol{\theta}}$ using Taylor's theorem and omit the remainder term as:
\begin{equation}
    \log \left[f({\bf y}|\boldsymbol{\theta})g(\boldsymbol{\theta})\right] \approx \log \left[f({\bf y}|\hat{\boldsymbol{\theta}})g(\hat{\boldsymbol{\theta}})\right] + (\boldsymbol{\theta}-\hat{\boldsymbol{\theta}}) \nabla_{\boldsymbol{\theta}}Q|_{\hat{\boldsymbol{\theta}}}+\frac{1}{2}(\boldsymbol{\theta}-\hat{\boldsymbol{\theta}})^TH_{\hat{\boldsymbol{\theta}}}(\boldsymbol{\theta}-\hat{\boldsymbol{\theta}})
\end{equation}
where $\nabla_{\boldsymbol{\theta}}Q$ is the gradient of $Q$ such that
$$\left(\nabla_{\boldsymbol{\theta}}Q\right)_i=\frac{\partial}{\partial \theta_i}Q$$
and $H_{\hat{\boldsymbol{\theta}}}$ is the Hessian matrix of dimension $|\boldsymbol{\theta}|\times |\boldsymbol{\theta}|=p\times p$ such that $$H_{ml}=\left.\frac{\partial^2}{\partial \theta_m \partial \theta_l}Q \right|_{\hat{\boldsymbol{\theta}}}$$
Since $Q$ attains its maximum at $\hat{\boldsymbol{\theta}}$, $\nabla_{\boldsymbol{\theta}}Q|_{\hat{\boldsymbol{\theta}}}=0$. Also since $Q$ is concave around $\hat{\boldsymbol{\theta}}$ and the Hessian matrix $H_{\hat{\boldsymbol{\theta}}}$ is negative definite. Denote $\tilde{H}_{\hat{\boldsymbol{\theta}}}=-H_{\hat{\boldsymbol{\theta}}}$, so $\tilde{H}_{\hat{\boldsymbol{\theta}}}$ is positive definite and symmetric. So equation \ref{pyM} could be approximated as
\begin{equation}
\label{pyM2}
\begin{aligned}
p({\bf y}|M) & \approx \int \exp \left\{ Q\left.\right|_{\hat{\boldsymbol{\theta}}} + (\boldsymbol{\theta}-\hat{\boldsymbol{\theta}}) \nabla_{\boldsymbol{\theta}}Q|_{\hat{\boldsymbol{\theta}}}-\frac{1}{2}(\boldsymbol{\theta}-\hat{\boldsymbol{\theta}})^T\tilde{H}_{\hat{\boldsymbol{\theta}}}(\boldsymbol{\theta}-\hat{\boldsymbol{\theta}}) \right\} d \boldsymbol{\theta}\\
& = \exp\left(Q\left.\right|_{\hat{\boldsymbol{\theta}}}\right) \int \exp \left\{ -\frac{1}{2}(\boldsymbol{\theta}-\hat{\boldsymbol{\theta}})^T\tilde{H}_{\hat{\boldsymbol{\theta}}}(\boldsymbol{\theta}-\hat{\boldsymbol{\theta}}) \right\} d \boldsymbol{\theta}\\
& \xlongequal{\text{let } {\bf X}= \boldsymbol{\theta}-\hat{\boldsymbol{\theta}}} \exp\left(Q\left.\right|_{\hat{\boldsymbol{\theta}}}\right) \int \exp \left\{ -\frac{1}{2} {\bf X}^T \tilde{H}_{\hat{\boldsymbol{\theta}}} {\bf X} \right \} d{\bf X}
\end{aligned}
\end{equation}
Since $\tilde{H}_{\hat{\boldsymbol{\theta}}}$ is symmetric, we could do the spectral decomposition for $\tilde{H}_{\hat{\boldsymbol{\theta}}}$ as
$$\tilde{H}_{\hat{\boldsymbol{\theta}}}=S^T \Lambda S$$
where $\Lambda$ is a diagonal matrix whose diagonal elements are eigenvalues of $\tilde{H}_{\hat{\boldsymbol{\theta}}}$, and the columns of $S$ are the corresponding independent eigenvectors. Note that $S$ is full rank and orthogonal, i.e. $S^TS=SS^T=I$. Let's change of variable as ${\bf X}=S^T{\bf U}$, or write it in detail as
$${\bf X}=\begin{pmatrix} X_1 \\ X_2 \\ \vdots \\ X_p \end{pmatrix}=S^T{\bf U}=\begin{pmatrix}s_{11} & s_{21} & \cdots & s_{p1} \\ s_{12} & s_{22} & \cdots & s_{p2} \\ \vdots & \vdots & \cdots & \vdots \\ s_{1p} & s_{2p} & \cdots & s_{pp}\end{pmatrix}\begin{pmatrix} U_1 \\ U_2 \\ \vdots \\ U_p \end{pmatrix}=\begin{pmatrix} \sum^p_{i=1}s_{i1}U_i \\ \sum^p_{i=1}s_{i2}U_i \\ \vdots \\ \sum^p_{i=1}s_{ip}U_i\end{pmatrix}$$
In other words $X_m=\sum^p_{i=1}s_{im}U_i$, which implies
$$\frac{\partial}{\partial U_l}X_m=s_{lm}$$
So the Jacobian matrix ${\bf J}$ would be 
$${\bf J}=\begin{pmatrix}\frac{\partial X_1}{\partial U_1}  & \cdots & \frac{\partial X_1}{\partial U_p}  \\ \vdots & \cdots & \vdots \\  \frac{\partial X_p}{\partial U_1}   & \cdots & \frac{\partial X_p}{\partial U_p} \end{pmatrix}=\begin{pmatrix}s_{11} & s_{21} & \cdots & s_{p1} \\ s_{12} & s_{22} & \cdots & s_{p2} \\ \vdots & \vdots & \cdots & \vdots \\ s_{1p} & s_{2p} & \cdots & s_{pp}\end{pmatrix}=S^T$$
Also note that $S^T$ is orthogonal, so
$$\det(S^TS)=\det(S^T)\det(S)=\left[\det(S^T)\right]^2=\det(I)=1$$
So $$\det({\bf J})=\det(S^T)= \pm 1$$
Let's continue with equation \ref{pyM2}, so we have
\begin{equation}
\label{pyM3}
\begin{aligned}
p({\bf y}|M) & \approx  \exp\left(Q\left.\right|_{\hat{\boldsymbol{\theta}}}\right) \int \exp \left\{ -\frac{1}{2} {\bf U}^TS S^T\Lambda S S^T{\bf U}  \right \} \left|\det({\bf J})\right|d{\bf U}\\
& = \exp\left(Q\left.\right|_{\hat{\boldsymbol{\theta}}}\right) \int \exp \left\{ -\frac{1}{2} {\bf U}^T \Lambda {\bf U}  \right \} d{\bf U}\\
& = \exp\left(Q\left.\right|_{\hat{\boldsymbol{\theta}}}\right) \int \exp \left\{ -\frac{1}{2} \sum^p_{i=1}U_i^2\lambda_i  \right \} d{\bf U}
\end{aligned}
\end{equation}
the last step here could be more clear if we write out the details as  
$${\bf U}^T \Lambda {\bf U}=\begin{pmatrix}U_1 & U_2 &  \cdots & U_p \end{pmatrix}\begin{pmatrix}\lambda_1 & \ & \ & \ & \\ \ & \lambda_2 & \ & \  \\ \ & \ & \ddots & \ \\ \ & \ & \ & \lambda_p  \end{pmatrix} \begin{pmatrix}U_1 \\ U_2 \\  \vdots \\ U_p \end{pmatrix}=\sum^p_{i=1}U_i^2\lambda_i$$
Note here we have a $p-$dimensional integration, each of them is a one dimensional integration of a normal kernel and could be evaluated using the property of the normal density. 
\begin{equation}
\label{pyM4}
\begin{aligned}
p({\bf y}|M) &  \approx \exp\left(Q\left.\right|_{\hat{\boldsymbol{\theta}}}\right) \int \cdots \int \exp \left\{ -\frac{1}{2} \sum^p_{i=1}U_i^2\lambda_i  \right \} dU_1 \cdots dU_p \\ 
& = \exp\left(Q\left.\right|_{\hat{\boldsymbol{\theta}}}\right)  \prod^p_{i=1} \int \exp \left\{ -\frac{1}{2} U_i^2\lambda_i  \right \} dU_i \\
& = \exp\left(Q\left.\right|_{\hat{\boldsymbol{\theta}}}\right)  \prod^p_{i=1} \sqrt{\frac{2\pi}{\lambda_i}}\\
& = \exp\left(Q\left.\right|_{\hat{\boldsymbol{\theta}}}\right) \frac{(2\pi)^{p/2}}{\prod^p_{i=1} \lambda_i^{1/2}}\\
& =  \exp\left(Q\left.\right|_{\hat{\boldsymbol{\theta}}}\right) \frac{(2\pi)^{p/2}}{\left[\det \left(\tilde{H}_{\hat{\boldsymbol{\theta}}}\right)\right]^{1/2}} 
\end{aligned}
\end{equation}
where the last step is using the fact that 
$$\det\left(\tilde{H}_{\hat{\boldsymbol{\theta}}} \right)=\det\left(S^T\Lambda S\right)=\det(S^T)\det(\Lambda)\det(S)=\det(\Lambda)=\prod^p_{i=1}\lambda_i$$
Thus, the log marginal likelihood of model $M$ has the relation
\begin{equation}
\label{logpyM}
\log P({\bf y}|M) \approx  \log f({\bf y}|\hat{\boldsymbol{\theta}}) + \log g(\hat{\boldsymbol{\theta}}) +\frac{p}{2}\log(2\pi)-\frac{1}{2}\log \left[\det \left( \tilde{H}_{\hat{\boldsymbol{\theta}}} \right)\right]
\end{equation}

\begin{remark}
We will derive the results further under the noninformative priors, i.e. when $g(\boldsymbol{\theta})=1$. Also, we assume the observations $y_1, y_2, \cdots, y_n$ are independent and identically distributed (i.i.d.), in which case the Hessian matrix becomes the observed Fisher information matrix. Also, assume $n$ is large which coincides with the condition $ \lambda \to \infty$ in Laplace's method and more importantly allows us to use the weak law of large numbers. I may want the readers to keep reminding themselves of this remark in the rest sections or even throughout this project since this is the core motivation of our research. 
\end{remark}
$$\begin{aligned}
\tilde{H}_{ml} & =-\left.\frac{\partial^2 \log \left[f({\bf y}|\boldsymbol{\theta}) g(\boldsymbol{\theta}) \right]}{\partial \theta_m \theta_l}\right|_{\boldsymbol{\theta}
=\hat{\boldsymbol{\theta}}}\\
& = -\left.\frac{\partial^2 \log \left[f({\bf y}|\boldsymbol{\theta}) \right]}{\partial \theta_m \theta_l}\right|_{\boldsymbol{\theta}  =\hat{\boldsymbol{\theta}} } \\
& =  -\left.\frac{\partial^2 \log \left[\prod^n_{i=1}f(y_i|\boldsymbol{\theta}) \right]}{\partial \theta_m \theta_l}\right|_{\boldsymbol{\theta}  =\hat{\boldsymbol{\theta}} }\\
& = -\left.\frac{\partial^2  \sum^n_{i=1} \log \left[f(y_i|\boldsymbol{\theta}) \right]}{\partial \theta_m \theta_l}\right|_{\boldsymbol{\theta}  =\hat{\boldsymbol{\theta}} } \\
& =-\left.\frac{\partial^2 \sum^n_{i=1}  \log \L(\boldsymbol{\theta}|y_i)}{\partial \theta_m\theta_l}\right|_{\boldsymbol{\theta}=\hat{\boldsymbol{\theta}}}\\
& = -\left.\frac{\partial^2 \left[ \frac{1}{n}\sum^n_{i=1} n \log \L(\boldsymbol{\theta}|y_i)\right]}{\partial \theta_m\theta_l}\right|_{\boldsymbol{\theta}=\hat{\boldsymbol{\theta}}}
\end{aligned}$$
Consider $n \log \L(\boldsymbol{\theta}|y_i)$ as a new random variable, by weak law of large number we obtain
$$\frac{1}{n}\sum^n_{i=1} n \log \L(\boldsymbol{\theta}|y_i) \xrightarrow{\text{in probability}}{\mathbb E} \left[n \log \L(\boldsymbol{\theta}|y_i) \right] \text{ for }\forall i \text{ as } n \to \infty.$$
So each element in the observed Fisher information matrix is 
$$\begin{aligned}
\tilde{H}_{ml} & = -\left.\frac{\partial^2 {\mathbb E} \left[n \log \L(\boldsymbol{\theta}|y_1) \right]}{\partial \theta_m\theta_l}\right|_{\boldsymbol{\theta}=\hat{\boldsymbol{\theta}}}\\ 
& = n\left\{- \left.\frac{\partial^2 {\mathbb E} \left[ \log \L(\boldsymbol{\theta}|y_1) \right]}{\partial \theta_m\theta_l}\right|_{\boldsymbol{\theta}=\hat{\boldsymbol{\theta}}}\right \}\\ 
& = n[I(\boldsymbol{\theta})]_{ml}
\end{aligned}$$
where $I(\boldsymbol{\theta})$ is the Fisher information matrix for a single data $y_1$. So
$$\det\left(\tilde{H}_{\hat{\boldsymbol{\theta}}} \right)=n^p \det \left(I(\boldsymbol{\theta}) \right)$$
Plugging this back to equation \ref{logpyM} and as $n \to \infty$ we only keep the terms involving sample size $n$, we have
\begin{equation}
\label{logpyM2}
\begin{aligned}
\log P({\bf y}|M) & \approx  \log \L(\hat{\boldsymbol{\theta}}|{\bf y}) + \log g(\hat{\boldsymbol{\theta}}) +\frac{p}{2}\log(2\pi) -\frac{1}{2}\log \left[\det \left( \tilde{H}_{\hat{\boldsymbol{\theta}}} \right)\right]\\
& = \log \L(\hat{\boldsymbol{\theta}}|{\bf y}) -\frac{p}{2}\log n-\frac{1}{2}\log \left[\det \left( I(\boldsymbol{\theta}) \right)\right]\\
& =  \log \L(\hat{\boldsymbol{\theta}}|{\bf y}) -\frac{p}{2}\log n
\end{aligned}
\end{equation}
\end{proof}
A lot of literature just keep the result in equation \ref{logpyM2} as the definition of BIC for model $M$, but I will use the definition for BIC as shown earlier in Definition \ref{bicdefinition} which is a variation of equation \ref{logpyM2} since it coincides with the formula for BIC in the programming language {\tt R}, which will easier to interpret in our later numerical experiments section. 

To remind ourselves, the Definition \ref{bicdefinition} for BIC is 
$$\text{BIC}=-2\log \L(\hat{\boldsymbol{\theta}}|{\bf y})+p\log n=-2\log P({\bf y}|M)$$
So when given two models, say $M_1$ and $M_2$, we will calculate the BIC for both of them and the Bayes factor for model comparison between $M_1$ and $M_2$ could then be approximated as 
$$\begin{aligned}\text{BF} & =\frac{P({\bf y}|M_1)}{P({\bf y}|M_2)}\\
& =\exp\left\{\log \left[\frac{P({\bf y}|M_1)}{P({\bf y}|M_2)} \right]\right\}\\
& =\exp\left\{\log P({\bf y}|M_1) - \log P({\bf y}|M_2) \right\} \\
& \approx \exp \left\{-\frac{1}{2}(\text{BIC}_1-\text{BIC}_2)\right\}\\
& =\exp \left\{-\frac{1}{2}\Delta \text{BIC}\right\}
\end{aligned}$$
From this, we could see that it is the difference between two BICs that matters, the model with the lowest BIC is always considered to be the best. The strength of the evidence against the model with the higher BIC value can be summarized \cite{bictable} as in Table \ref{bictable}. Readers could compare this table with the previous Table \ref{tbl:jeff} which is the scale of BF in the model comparison.
These two tables are corresponding with each other approximately. 
\begin{table}
\begin{center}
\caption{Strength of Evidence Provided by the Difference in BIC Values.}
\label{bictable}
\setlength{\tabcolsep}{0.5 em} 
{\renewcommand{\arraystretch}{1.5}
   \begin{tabular}{|c|c|} \hline
   $\Delta$BIC & Evidence against higher BIC \cr \hline  \hline
   $0$   to  $2$    & Not worth more than a bare mention   \cr \hline
   $2$   to  $6$    & Positive              \cr \hline
   $6$   to  $10$   & Strong                \cr \hline
   $> 10$           & Very Strong           \cr \hline
   \end{tabular}
}
\end{center}
\end{table}

\subsection{Linear Mixed-effects Models}
Linear mixed-effects models are an extension of simple linear models which include both fixed and random effects. Consider an example where we have $N$ patients, and we measure the blood pressure, age, weight, height, etc. at each morning during a week for each patient. We want to predict the blood pressure using the rest of the variables. If we assume that all the patients have the same slope and intercept relating blood pressure to age, weight, and height, then we can fit a regular linear model with blood pressure as the response and the other variables as the predictors. 

A mixed-effects model has both random and fixed effects. It usually happens when we have a model with a categorical predictor and the observations are divided into groups according to the category values. In our example, the categorical predictor could be the patient ID. Then the random effects can account for individual differences when a week's observations within persons are more correlated than observations between persons. 

A general linear mixed model with Gaussian errors for subjects $i$ (or patient $i$ in the example above) is \cite{13,14,longitudinal}
\begin{equation}
\label{mixed}
    {\bf y}_i={\bf X}_i{\boldsymbol{\beta}}+{\bf Z}_i{\boldsymbol{\alpha}}_i+{\boldsymbol{\epsilon}}_i, \ \ i=1, 2, \cdots, N.
\end{equation}
where
\begin{itemize}
    \item ${\bf y}_i=\begin{pmatrix} y_{i1} \\ \vdots \\  y_{in_i}\end{pmatrix}$ is a column vector of length $n_i$ of the response variables for subject $i$, and $y_{ij}$ is the $j$th observation on the $i$th subject. For example, $y_{23}$ could be the blood pressure for the second patient measured on the third day of a week.
    
    \item ${\bf X}_i$ is an $n_i \times p$ matrix of observed variables, usually with the first column as all 1's, i.e. ${\bf X}_i=\begin{pmatrix}1 & x_{i;1,1} & x_{i;1,2} & \cdots & x_{i;1,(p-1)}\\
    1 & x_{i;2,1} & x_{i;2,2} & \cdots & x_{i;2,(p-1)}\\
    \vdots & \vdots & \vdots & \cdots & \vdots\\
    1 & x_{i;n_i, 1} & x_{i;n_i,2} & \cdots & x_{i;n_i,(p-1)}\\ 
    \end{pmatrix}$. According to the blood pressure example, the second column vector of ${\bf X}_i$ could be the weight of patient $i$ measured in a week. Similarly, the third column could be the height, and the fourth column could be the age, and so on.
    
    \item ${\boldsymbol{\beta}}$ is the unknown regression coefficients of length $p$, which is the fixed effects vector  need to be estimated.
    
    \item ${\bf Z}_i=\begin{pmatrix}1 & z_{i;1,1} & z_{i;1,2} & \cdots & z_{i;1,(q-1)}\\
    1 & z_{i;2,1} & z_{i;2,2} & \cdots & z_{i;2,(q-1)}\\
    \vdots & \vdots & \vdots & \cdots & \vdots\\
    1 & z_{i;n_i, 1} & z_{i;n_i,2} & \cdots & z_{i;n_i,(q-1)}\\ 
    \end{pmatrix}$ is an $n_i \times q$ matrix, usually with the first column all 1's, for the random effects, ${\boldsymbol{\alpha}}_i$.
    
    \item $\boldsymbol{\alpha}_i$ are the unknown random effects vectors of length $q$, which are assumed to be independently distributed across subjects with distribution $\boldsymbol{\alpha}_i \sim N(0, {\bf G})$.
    
    \item $\boldsymbol{\epsilon}_i$ is the random error vector which is assumed to be independent across subjects with distribution $\boldsymbol{\epsilon}_i \sim N(0, \boldsymbol{\Sigma}_i)$.
    
    \item Here we also assume that $\boldsymbol{\alpha}_i$ and $\boldsymbol{\epsilon}_i$ are independent.
\end{itemize}
Then the covariance matrix of the response ${\bf y}_i$ is
\begin{equation}
\label{varyi}
\begin{aligned}
\var({\bf y}_i) & =\var({\bf X}_i{\boldsymbol{\beta}}+{\bf Z}_i{\boldsymbol{\alpha}}_i+{\boldsymbol{\epsilon}}_i)\\
& = \var({\bf Z}_i{\boldsymbol{\alpha}}_i+{\boldsymbol{\epsilon}}_i)\\
& =\var({\bf Z}_i{\boldsymbol{\alpha}}_i) + \var({\boldsymbol{\epsilon}}_i)\\
& = {\bf Z}_i\var({\boldsymbol{\alpha}}_i) {\bf Z}_i^T + \var({\boldsymbol{\epsilon}}_i)\\
& = {\bf Z}_i {\bf G} {\bf Z}_i^T +\boldsymbol{\Sigma}_i
\end{aligned}
\end{equation}
We will see how this variance is important in the next section \ref{sec:BIC_ne}. Model \ref{mixed} is our main model for the whole research, in other words, we are considering improving the definition for BIC under the mixed-effects model case. Mixed-effects model plays an important role in model selection when the data is not independent, for example, clustering data or longitudinal data, etc. The improved BIC provides a more precise method to select between mixed-effects models which also gives a better theoretical foundation than the original BIC for the data that is not independent.


\section{Improved definitions for BIC}
The expression of BIC is under the assumption that the observations $y_1, y_2, \cdots, y_n$ are independent and identically distributed (i.i.d.), in which case the Hessian matrix becomes the observed Fisher information matrix. However, we do not always have i.i.d. samples in practice. For example, clustered observations tend to be more similar to each other within the same group than those observations in other groups, and longitudinal data is collected by measuring the same subject repeatedly\cite{longitudinal}.  In these scenarios, the assumption in BIC, the observations are independent, is not satisfied. The concept of effective sample size was brought up in many literatures like \cite{longitudinal, TESS, ess_blog}. I will give a detailed explanation about these ideas in the following sections, and improved BIC is defined by replacing the sample size $n$ in the original BIC expression with the effective sample size.



\subsection{New BIC using effective sample size}
\label{sec:BIC_ne}

The Bayesian Information Criterion (BIC) model selection procedure provides a consistent, compared with AIC, and easily performed method \cite{27}. However, the BIC expression differs from one software to another. Since in the penalty part, $p\log n$, the effective sample size, $n$, and the effective number of parameters, $p$, are not well defined in the non-iid observation circumstances such as in mixed-effects models. The $\log n$ penalty is implemented in the {\tt R} package {\tt nlme} \cite{23} and {\tt lme4} \cite{lp} and in the SPSS procedure MIXED \cite{28} where $n$ is the total number of observations, while the $\log N$ penalty is used in Monolix \cite{18}, saemix \cite{2} or in the SAS proc NLMIXED \cite{25} where $N$ is the number of subjects in mixed models. Two improved BICs are defined for general mixed-effects models using the effective sample size discussed in section \ref{sec:ess} and later in section \ref{sec:BICh}.

\subsubsection{The Effective Sample Size}
\label{sec:ess}
Let's start this section with an example mentioned in \cite{ess_blog}:
\begin{quotation}
    {\em On a scale of 0 to 10, how much does the average citizen of the Republic of Elbonia trust the president? You’re conducting a survey to find out, and you’re going to need a sample of 100 statistically independent individuals. Now you have to decide how to do this.

You could stand in the central square of the capital city and survey the next 100 people who walk by. But these opinions won’t be independent: probably politics in the capital isn’t representative of politics in Elbonia as a whole.
So you consider traveling to 100 different locations in the country and asking one Elbonian at each. But apart from anything else, this is far too expensive for you to do. Maybe a compromise would be OK. You could go to 10 locations and ask 20 people at each? 30? How many would you need to match the precision of 100 independent individuals - to have an “effective sample size” of 100?}
\end{quotation}

The precision mentioned above is typically defined as the reciprocal of the variance of an estimator. In practice, precision often refers to the closeness of two or more observations to each other. A high variance estimator has low precision and vice versa. Then in \cite{ess_blog}, we have a ``loose'' definition for the effective sample size as:

\begin{definition}
The {\bf Effective Sample Size} of an estimator is the number $n_{e}$ with the property that our estimator has the same precision (or variance) as the estimator got by sampling $n_{e}$ independent individuals.
\end{definition}

\begin{example}\cite{ess_blog}
When we have observations $y_1, y_2, \cdots, y_n$ are independent and identically distributed (iid). One estimator for the population mean $\mu$ could be the sample mean, i.e. 
$$\frac{1}{n}y_1+\frac{1}{n}y_2+\cdots + \frac{1}{n}y_n$$
Since $y_1, y_2, \cdots, y_n$ are iid, then the variance of this estimator is
$$\text{Var}\left(\frac{1}{n}y_1+\frac{1}{n}y_2+\cdots + \frac{1}{n}y_n\right)=n \cdot \text{Var}\left(\frac{1}{n}y_1\right)=n\cdot \frac{1}{n^2}\text{Var}(y_1)=\frac{\sigma^2}{n}$$
where $\sigma^2$ is the population variance. In other words, the precision of this estimator is
$$\text{precision}=\frac{n}{\sigma^2}$$
which increases as the sample size $n$ increases.

Now suppose we have a random sample $y_1, y_2, \cdots, y_n$ by which the observations do not need to be independent of each other. Let $\hat{\mu}$ be an estimator of the population mean $\mu$ with variance Var$\left(\hat{\mu}\right)$, then the precision of the estimator $\hat{\mu}$ is 
$$\text{precision}=\frac{1}{\text{Var}\left(\hat{\mu}\right)}$$
If we want to obtain the same precision by sampling $n_e$ observations independently, then 
$$\text{precision}=\frac{1}{\text{Var}\left(\hat{\mu}\right)}=\frac{n_e}{\sigma^2}$$
Hence, the {\bf effective sample size} of $\hat{\mu}$ is defined as
$$n_e=\frac{\sigma^2}{\text{Var}\left(\hat{\mu}\right)}=\text{population variance} \times \text{precision of the estimator}$$
\end{example}

\begin{definition}\cite{ess_blog}
The {\bf magnitude} $|R|$ of an invertible $n \times n$ matrix $R$ is the sum of all $n^2$ entries of $R^{-1}$.
\end{definition}
Since calculating the inverse is computationally expensive, however, solving linear systems is faster than computing inverses. Then to get $|R|$ we do not have to get the inverse of $R$. A much easier way would be using Gaussian elimination  to solve 
$$R{\bf w}= \mathbbm{1}$$
Then 
$$|R|=\mathbbm{1}^T R^{-1}\mathbbm{1}=\mathbbm{1}^T R^{-1}R{\bf w} =\mathbbm{1}^T {\bf w}= \sum w_i$$

\begin{theorem}
\label{ne}
The effective sample size of an unbiased linear estimator of the population mean is the magnitude of the sample correlation matrix $R$.
\end{theorem}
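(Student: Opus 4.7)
The plan is to parameterize any unbiased linear estimator of $\mu$ by a weight vector $\mathbf{w}$, write the effective sample size as the reciprocal of a quadratic form in $\mathbf{w}$, and pick the weights that maximize it via constrained optimization. This will identify the optimum with $|R|$ as defined above.

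First I would write $\hat\mu = \mathbf{w}^T \mathbf{y}$ with $\mathbf{w} \in \R^n$. Unbiasedness forces $\mathbb{E}[\hat\mu] = \mu \, (\mathbf{1}^T \mathbf{w}) = \mu$, hence the constraint $\mathbf{1}^T \mathbf{w} = 1$. Writing the covariance matrix of $\mathbf{y}$ as $\var(\mathbf{y}) = \sigma^2 R$ (each observation has variance $\sigma^2$ and $R$ is the sample correlation matrix), one gets $\var(\hat\mu) = \sigma^2 \, \mathbf{w}^T R \mathbf{w}$. Applying the effective-sample-size definition introduced in the preceding example, $n_e = \sigma^2/\var(\hat\mu) = 1/(\mathbf{w}^T R \mathbf{w})$.

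Next I would choose $\mathbf{w}$ to make $n_e$ as large as possible, i.e.\ minimize $\mathbf{w}^T R \mathbf{w}$ subject to $\mathbf{1}^T \mathbf{w} = 1$. A Lagrange multiplier argument produces the stationarity condition $R \mathbf{w} = \lambda \mathbf{1}$, so $\mathbf{w}^\star = \lambda R^{-1}\mathbf{1}$; the constraint then fixes $\lambda = 1/(\mathbf{1}^T R^{-1}\mathbf{1}) = 1/|R|$. Plugging this back, $(\mathbf{w}^\star)^T R \mathbf{w}^\star = \lambda (\mathbf{w}^\star)^T \mathbf{1} = \lambda = 1/|R|$, so the optimal (and hence canonical) effective sample size equals exactly $|R|$. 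Equivalently, the minimizing $\hat\mu$ is the generalized least-squares estimator $\hat\mu_{\mathrm{BLUE}} = (\mathbf{1}^T R^{-1}\mathbf{y})/|R|$.

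The main obstacle is really interpretive rather than technical: the statement refers to ``an'' unbiased linear estimator, but $n_e$ genuinely depends on the choice of $\mathbf{w}$, so one must read the theorem as referring to the best (minimum-variance) linear unbiased estimator. Once that convention is pinned down, the rest is a short Lagrangian calculation. A secondary point to check is that $R$ is invertible (it is positive definite as the correlation matrix of a nondegenerate sample), which is already tacitly assumed by the paper's definition of $|R|$.
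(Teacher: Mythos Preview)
Your proposal is correct and follows essentially the same route as the paper: both parameterize the estimator by a weight vector with the constraint $\mathbf{1}^T\mathbf{w}=1$, express $n_e$ as the reciprocal of the quadratic form $\mathbf{w}^T R\mathbf{w}$, and identify the optimum with $|R|$. The only cosmetic difference is in how the constrained minimum is located---the paper invokes Cauchy--Schwarz (in the $R$-inner product) while you use a Lagrange multiplier---and your explicit remark that the theorem tacitly refers to the \emph{best} linear unbiased estimator is a useful clarification that the paper leaves implicit.
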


\begin{proof}
Suppose we have $n$ observations denoted as ${\bf y}^T=(y_1, y_2, \cdots, y_n)$ which are identically but not necessary independent distributed. Suppose we are only consider linear unbiased estimator of the population mean $\mu$, so the estimator could be written as 
$$\hat{\mu}={\bf a}^T{\bf y}$$
for some vector ${\bf a}^T=(a_1, \cdots, a_n)$, such that
$$\mathbbm{E}\left(\hat{\mu}\right)=\mu.$$
Then the variance of the estimator $\hat{\mu}$ is
$$\var\left(\hat{\mu}\right) ={\bf a}^T \var({\bf y}){\bf a} = \sigma^2 {\bf a}^T R{\bf a} $$
The effective sample size is
$$n_e=\frac{\sigma^2}{\var(\hat{\mu})}=\frac{1}{{\bf a}^T R{\bf a}}$$
So the maximum effective size among all possible linear unbiased estimator is defined as
$$\sup\left\{\frac{1}{{\bf a}^T R{\bf a}}: a 
\in \R^n, \sum^n_{i=1}a_i=1 \right\} $$
By Cauchy-Schwarz inequality, we have that the supremum is obtained at ${\bf a}={\bf w}/|R|$, i.e.
$$\text{ maximum }n_e=\frac{1}{\frac{{\bf w}^T}{|R|}R\frac{{\bf w}}{|R|}}=\frac{1}{\frac{{\bf w}^T}{|R|^2}\mathbbm{1}}=\frac{|R|^2}{\sum^n_{i=1}w_i}=\frac{|R|^2}{|R|}=|R|$$
OR usually approximately,
$$n_e=|R|=\text{magnitude of correlation matrix}$$
\end{proof}

The effective sample size does not need to be less than the total number of observations. One simple example would explain it.
\begin{example}\cite{ess_blog}
Suppose we have two observations $y_1$ and $y_2$. The correlation matrix is
$$R=\begin{pmatrix}1 & \rho \\ \rho & 1 \end{pmatrix}$$
Then $$R^{-1}=\frac{1}{1-\rho^2}\begin{pmatrix}1 & -\rho \\ -\rho & 1\end{pmatrix}$$
So the effective sample size equals the magnitude of the correlation matrix, which is
$$n_e=|R|=\frac{1}{1-\rho^2}\left(1-\rho+1-\rho\right)=\frac{2(1-\rho)}{1-\rho^2}=\frac{2}{1+\rho}$$
Then we could see that when $\rho$ is some negative number between $-1$ and 0, we will have an effective sample size which is greater than 2.
\end{example}

\subsubsection{$BIC_{n_e}$}
\begin{definition}

\label{bic_ne}
The Bayesian information criterion (BIC) using the effective sample size $n_e$ for candidate model $M$ is defined as
\begin{equation}
    \text{BIC}_{n_e}=-2\log \L(\hat{\boldsymbol{\theta}}|{\bf y})+p\log n_e
\end{equation}
where $\hat{\boldsymbol{\theta}}$ is the  maximum likelihood estimate (MLE) of $\boldsymbol{\theta}$ that maximize the likelihood function $\L(\boldsymbol{\theta}|{\bf y})$, $p$ is the number of parameters in the model, i.e. the dimension of $\boldsymbol{\theta}$, $|\boldsymbol{\theta}|$, and $n_e$ is the effective sample size defined in Theorem \ref{ne}, i.e. $n_e=|R|$=magnitude of correlation matrix $R$.

\end{definition}

We will bring in another innovative BIC in the following section, i.e. the $\text{BIC}_h$. Then we will conduct a simulation study to compare the performance of $\text{BIC}_{ne}$ and $BIC_{h}$ with two widely used BIC, the one with the sample size equals the number of subjects $N$ and the one with the sample size takes the total number of observations $n$.

\subsection{New BIC using hybrid sample size}
\label{sec:BICh}
According to \cite{donna}, 

{\em ... the information is of order $N$ (the number of units) for fixed effects with associated random effects, and of order $n$ (the number of total observations) for fixed effects with no associated random effect. This shows that the penalty term appearing in BIC will depend on which parameters are tested and on the specific variance-covariance structure of the model. 
}

This idea leads to the definition of a new BIC using the hybrid sample size.

\begin{theorem}

The Bayesian information criterion (BIC) using the hybrid sample size
for candidate model is defined as \cite{BICh}

\begin{equation}
    \text{BIC}_h=-2\log \L(\hat{\boldsymbol{\theta}}|{\bf y})+|\boldsymbol{\theta}_R|\log N +|\boldsymbol{\theta}_F|\log n
\end{equation}
where $\hat{\boldsymbol{\theta}}$ is the  maximum likelihood estimate (MLE) of $\boldsymbol{\theta}$ that maximize the likelihood function $\L(\boldsymbol{\theta}|{\bf y})$, $\boldsymbol{\theta}_R$ is the random components of parameter vector $\boldsymbol{\theta}$ in the model, and $|\boldsymbol{\theta}_R|$ the dimension of $\boldsymbol{\theta}_R$. Similarly,  $|\boldsymbol{\theta}_F|$ the dimension of $\boldsymbol{\theta}_F$, the fixed components of parameter vector $\boldsymbol{\theta}$. And same as before, $N$ is the number of subjects in the mixed effects model and $n$ is the number of total observations. 
\end{theorem}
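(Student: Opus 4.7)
The plan is to mirror the Laplace-approximation derivation given for Theorem \ref{bic}, but to track the different rates at which the observed Fisher information grows for the fixed-only components $\boldsymbol{\theta}_F$ versus the components $\boldsymbol{\theta}_R$ tied to random effects. I would start from equation \ref{logpyM}, which is the generic Laplace expansion
$$\log P({\bf y}|M) \approx \log \L(\hat{\boldsymbol{\theta}}|{\bf y}) + \log g(\hat{\boldsymbol{\theta}}) + \frac{p}{2}\log(2\pi) - \frac{1}{2}\log\det(\tilde{H}_{\hat{\boldsymbol{\theta}}}),$$
and use the fact that under model \ref{mixed} the subjects ${\bf y}_1,\ldots,{\bf y}_N$ are independent, so $\log \L(\boldsymbol{\theta}|{\bf y}) = \sum_{i=1}^N \log \L(\boldsymbol{\theta}|{\bf y}_i)$. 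This is the mixed-model analogue of the i.i.d.\ step that produced the $n^p$ factor in Theorem \ref{bic}, and it is the only place the sample-size counts enter, so the whole argument reduces to bookkeeping of the orders of magnitude of the blocks of $\tilde{H}_{\hat{\boldsymbol{\theta}}}$.

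Next I would partition $\boldsymbol{\theta}=(\boldsymbol{\theta}_F,\boldsymbol{\theta}_R)$ and write the observed information in the block form
$$\tilde{H}_{\hat{\boldsymbol{\theta}}} = \begin{pmatrix} \tilde{H}_{FF} & \tilde{H}_{FR} \\ \tilde{H}_{RF} & \tilde{H}_{RR} \end{pmatrix}.$$
The key observation is that each per-subject log-likelihood $\log \L(\boldsymbol{\theta}|{\bf y}_i)$ is based on $n_i$ correlated observations whose joint density depends on $\boldsymbol{\theta}_R$ only through the $q\times q$ variance structure ${\bf Z}_i{\bf G}{\bf Z}_i^T + \boldsymbol{\Sigma}_i$, while a purely fixed parameter enters the mean of every one of the $n_i$ components. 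Applying the weak law of large numbers exactly as in the proof of Theorem \ref{bic}, the $\boldsymbol{\theta}_R$-block contributes a sum of $N$ i.i.d.\ subject-level curvatures, so $\tilde{H}_{RR} = N\, I_{RR}(\boldsymbol{\theta}) + o(N)$, whereas the $\boldsymbol{\theta}_F$-block collects curvature from every one of the $n=\sum n_i$ scalar responses and satisfies $\tilde{H}_{FF} = n\, I_{FF}(\boldsymbol{\theta}) + o(n)$. This is the quantitative form of the Delattre--Lavielle--Poursat remark quoted in the paper.

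Then I would argue that the cross block $\tilde{H}_{FR}$ is of lower order (at most $O(\sqrt{Nn})$ by Cauchy--Schwarz on the score cross-products, or exactly zero when the fixed- and random-effect scores are orthogonal, which holds for Gaussian mixed models with the parametrisation of \ref{mixed}). Using the Schur complement identity
$$\det(\tilde{H}_{\hat{\boldsymbol{\theta}}}) = \det(\tilde{H}_{FF})\det\bigl(\tilde{H}_{RR}-\tilde{H}_{RF}\tilde{H}_{FF}^{-1}\tilde{H}_{FR}\bigr),$$
the leading order is $\det(\tilde{H}_{FF})\det(\tilde{H}_{RR}) = n^{|\boldsymbol{\theta}_F|} N^{|\boldsymbol{\theta}_R|} \det I_{FF}(\boldsymbol{\theta})\det I_{RR}(\boldsymbol{\theta}) + \text{l.o.t.}$, so
$$\log\det(\tilde{H}_{\hat{\boldsymbol{\theta}}}) = |\boldsymbol{\theta}_F|\log n + |\boldsymbol{\theta}_R|\log N + O(1).$$
Substituting into \ref{logpyM}, dropping the terms that are $O(1)$ in $n$ and $N$, and multiplying by $-2$ as in Definition \ref{bicdefinition} yields the claimed $\text{BIC}_h$.

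The main obstacle I anticipate is rigorously justifying that the off-diagonal block $\tilde{H}_{FR}$ is negligible and, more delicately, that the $\boldsymbol{\theta}_R$-block really grows like $N$ rather than $n$. A purely fixed parameter can be estimated consistently at rate $\sqrt{n}$, but a variance parameter entering only through ${\bf Z}_i{\bf G}{\bf Z}_i^T+\boldsymbol{\Sigma}_i$ is identified only from between-subject variation and therefore converges at rate $\sqrt{N}$ no matter how large the $n_i$ grow; making this precise requires assumptions on the design matrices ${\bf Z}_i$ and a regime in which $N\to\infty$ with $n_i$ either bounded or growing at a controlled rate. I would state these as regularity conditions (analogous to Schwarz's regular-exponential-family assumption) and cite \cite{donna} and \cite{BICh} for the technical conditions under which the block-scaling argument is valid.
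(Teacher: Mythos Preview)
The paper does not actually prove this theorem: immediately after the statement it says ``The full proof of the definition of $\text{BIC}_h$ could be found in \cite{BICh} section 2'' and then works through an illustrative example showing how to classify the components of $\boldsymbol{\theta}$ into $\boldsymbol{\theta}_R$ and $\boldsymbol{\theta}_F$ for a particular quadratic mixed model. So there is no in-paper proof to compare against; the theorem is presented as a definition/quotation from Delattre, Lavielle and Poursat.

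That said, your proposal is the right argument and is essentially what \cite{BICh} does: Laplace expansion as in \ref{logpyM}, block partition of the observed information according to whether a parameter is tied to a random effect, and the rate dichotomy ($N$ for $\boldsymbol{\theta}_R$, $n$ for $\boldsymbol{\theta}_F$) that you attribute to \cite{donna}. Your identification of the delicate step is also accurate: the crux in \cite{BICh} is precisely establishing, under stated regularity conditions and an asymptotic regime $N\to\infty$ with the $n_i$ growing, that the $RR$-block of the information is $O(N)$ while the $FF$-block is $O(n)$, and that the cross terms do not spoil the leading order of $\log\det\tilde H_{\hat{\boldsymbol{\theta}}}$. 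One small refinement: your Cauchy--Schwarz bound $O(\sqrt{Nn})$ on $\tilde H_{FR}$ is not by itself enough to make the Schur-complement correction $\tilde H_{RF}\tilde H_{FF}^{-1}\tilde H_{FR}$ negligible relative to $\tilde H_{RR}$ (it would give $O(N)$, the same order), so in practice one needs the stronger structural fact you mention---orthogonality or near-orthogonality of the $F$- and $R$-scores in the Gaussian mixed model---which is what \cite{BICh} exploits.
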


Suppose we have a linear mixed-effects model 
$${\bf y}_i=X_i\psi_i+\epsilon_i$$
where $X_i$ is the design matrix and $\epsilon_i \sim N (0, \Sigma)$. A linear model for $\psi_i$ is also assumed as
\begin{equation}
\label{psi}
    \psi_i=C_i\beta+\eta_i
\end{equation}
where $\eta_i\sim N(0,\Omega)$. The vector of population parameters $\boldsymbol{\theta}$ includes $\beta$ and the parameter in $\Omega$.

Equation \ref{psi} considers models in which certain individual parameters in vector $\psi_i$ are random or fixed. Degenerated $\Omega$ could be a block-diagonal matrix as
$$\Omega=\begin{pmatrix}0 & 0 \\  0 & \Omega_R \end{pmatrix}$$
Here, when we use $\text{BIC}_h$, we assume that we know the structure of $\Omega$, i.e. we know which diagonal elements of $\Omega$ are zeros. From which we could see which parameters in the vector $\psi_i$ are fixed and which are random.

I will use an example in the simulation study section in \cite{BICh} to illustrate the $BIC_{h}$. The full proof of the definition of $\text{BIC}_h$ could be found in \cite{BICh} section 2.

\begin{example}\cite{BICh}
Suppose we have a linear mixed effects model as
$$y_{ij}=X_{ij}\psi_i+\epsilon_{ij}=\psi_{i0}+\psi_{i1}x_{ij}+\psi_{i2}x^2_{ij}+\epsilon_{ij}$$
for $i=1,...,N, j=1,...,n$, where $X_{ij}=\begin{pmatrix}1 & x_{ij} & x^2_{ij} \end{pmatrix}$ and $\psi_i=\begin{pmatrix}\psi_{i0} \\ \psi_{i1} \\ \psi_{i2} \end{pmatrix}$ with $\epsilon_{ij} \stackrel{iid}{\sim} N(0, \sigma^2)$.
Here we have
$$\psi_i=C_i\beta+\eta_i$$
with
$$C_i=\begin{pmatrix}1 & 0 & 0 & 0 & 0 \\ 0 & 1 & 0 & c_i & 0 \\ 0 & 0 & 1 & 0 & c_i  \end{pmatrix}; \hspace{1cm} \beta=\begin{pmatrix}\mu_0 \\ \mu_1 \\ \mu_2 \\ \alpha_1 \\ \alpha_2 \end{pmatrix}$$
and $\eta_i \stackrel{iid}{\sim} N(0,\Omega)$ with
$$\Omega=\begin{pmatrix}\omega_0^2 & 0 & 0 \\ 0& \omega^2_1 & 0 \\ 0 & 0 & \omega_2^2  \end{pmatrix}$$
The vector of population parameters $\boldsymbol{\theta}$ includes $\beta$ and the parameter in $\Omega$. Here, we reduce the model selection problem to select the non zero elements between $\alpha_1$ and $\alpha_2$ under four possible variance models, i.e. whether $\omega_1^2$ and $\omega^2_2$ are zeros or not. Thus, we have $4 \times 4=16$ possible situations as the combinations of the following:
$$\begin{aligned}M_1: & \alpha_1 = 0, \alpha_2 = 0 \\ 
M_2: & \alpha_1 \neq 0, \alpha_2 = 0\\
M_3: & \alpha_1 = 0, \alpha_2 \neq 0\\
M_4: & \alpha_1 \neq 0, \alpha_2 \neq 0\\
\end{aligned} \hspace{1cm} \text{ and } \hspace{1cm}
\begin{aligned} 
O_1: & \omega^2_1 = 0, \omega^2_2 = 0 \\ 
O_2: & \omega^2_1 \neq 0, \omega^2_2 = 0\\
O_3: & \omega^2_1 = 0, \omega^2_2 \neq 0\\
O_4: & \omega^2_1 \neq 0, \omega^2_2 \neq 0\\\end{aligned}$$

We will show one scenario in details here, and the full table of the elements of $\theta_R$, $\theta_F$ and penalization terms used by $\text{BIC}_N$, $\text{BIC}_n$ and $\text{BIC}_h$ will be shown in Table \ref{bichtable}.
Consider the case under $O_2$ and $M_1$,
$$O_2:  \omega^2_1 \neq 0, \omega^2_2 = 0; \hspace{1cm} M_1: \alpha_1 = 0, \alpha_2 = 0$$
Now
$$\Omega=\begin{pmatrix}\omega_0^2 & 0 & 0 \\ 0& \omega^2_1 & 0 \\ 0 & 0 & 0  \end{pmatrix} \text{ and } \beta=\begin{pmatrix}\mu_0 \\ \mu_1 \\ \mu_2 \\ 0 \\ 0 \end{pmatrix}$$
Then
$$\psi_i=C_i\beta+\eta_i=\begin{pmatrix}\mu_0\\ \mu_1+\alpha_1c_i \\ \mu_2+\alpha_2c_i \end{pmatrix}+\begin{pmatrix} \eta_{i0} \\ \eta_{i1} \\ \eta_{i2}\end{pmatrix}=\begin{pmatrix}\mu_0\\ \mu_1 \\ \mu_2 \end{pmatrix}+\begin{pmatrix} \eta_{i0} \\ \eta_{i1} \\ 0\end{pmatrix}=\begin{pmatrix}\mu_0+\eta_{i0}\\ \mu_1+\eta_{i1} \\ \mu_2 \end{pmatrix}$$
Hence we can see that in the vector $\psi_i$, $\mu_0$ and $\mu_1$ are random, $\mu_2$ is fixed. Don't forget the parameter $\boldsymbol{\theta}$ takes the elements in $\Omega$ and $\sigma^2$ as well.

\begin{table}
\begin{center}
\caption{Elements of $\theta_R$, $\theta_F$ and penalization terms used by different BICs}
\label{bichtable}
\setlength{\tabcolsep}{0.5 em} 
{\renewcommand{\arraystretch}{1.5}
   \begin{tabular}{|c|c|c|c|c|c|c|c|} \hline
   O & M  & $\beta_R$ & $\Omega_R$ & $\theta_F$ & BIC$_N$ & BIC$_n$ & BIC$_h$ \cr \hline  \hline
   \multirow{4}{*}{$O_1$} & $M_1$ & $\mu_0$ & $\omega_0$ & $\sigma^2, \mu_1, \mu_2$ & $5\log N$ & $5\log n$ & $2\log N + 3\log n$ \\  
   & $M_2$ & $\mu_0$ & $\omega_0$ & $\sigma^2, \mu_1, \mu_2, \alpha_1$ & $6\log N$ & $6\log n$ & $2\log N + 4\log n$ \\ & $M_3$ & $\mu_0$ & $\omega_0$ & $\sigma^2, \mu_1, \mu_2, \alpha_2$ & $6\log N$ & $6\log n$ & $2\log N + 4\log n$ \\
   & $M_4$ & $\mu_0$ & $\omega_0$ & $\sigma^2, \mu_1, \mu_2, \alpha_1, \alpha_2$ & $7\log N$ & $7\log n$ & $2\log N + 5\log n$ 
   \cr \hline

\multirow{4}{*}{$O_2$} & $M_1$ & $\mu_0, \mu_1$ & $\omega_0, \omega_1$ & $\sigma^2, \mu_2$ & $6\log N$ & $6\log n$ & $4\log N + 2\log n$ \\  
   & $M_2$ & $\mu_0, \mu_1, \alpha_1$ & $\omega_0, \omega_1$ & $\sigma^2, \mu_2$ & $7\log N$ & $7\log n$ & $5\log N + 2\log n$ \\ 
   & $M_3$ & $\mu_0, \mu_1$ & $\omega_0, \omega_1$ & $\sigma^2, \mu_2, \alpha_2$ & $7\log N$ & $7\log n$ & $4\log N + 3\log n$ \\
   & $M_4$ & $\mu_0, \mu_1, \alpha_1$ & $\omega_0, \omega_1$ & $\sigma^2, \mu_2,  \alpha_2$ & $8\log N$ & $8\log n$ & $5\log N + 3\log n$ 
   \cr \hline
   
   \multirow{4}{*}{$O_3$} & $M_1$ & $\mu_0, \mu_2$ & $\omega_0, \omega_2$ & $\sigma^2, \mu_1$ & $6\log N$ & $6\log n$ & $4\log N + 2\log n$ \\  
   & $M_2$ & $\mu_0, \mu_2$ & $\omega_0, \omega_2$ & $\sigma^2, \mu_1, \alpha_1$ & $7\log N$ & $7\log n$ & $4\log N + 3\log n$ \\ 
   & $M_3$ & $\mu_0, \mu_2, \alpha_2$ & $\omega_0, \omega_2$ & $\sigma^2, \mu_1$ & $7\log N$ & $7\log n$ & $5\log N + 2\log n$ \\
   & $M_4$ & $\mu_0, \mu_2, \alpha_2$ & $\omega_0, \omega_2$ & $\sigma^2, \mu_1,  \alpha_1$ & $8\log N$ & $8\log n$ & $5\log N + 3\log n$ 
   \cr \hline
   
   \multirow{4}{*}{$O_4$} & $M_1$ & $\mu_0, \mu_1, \mu_2$ & $\omega_0,\omega_1, \omega_2$ & $\sigma^2$ & $7\log N$ & $7\log n$ & $6\log N + \log n$ \\  
   & $M_2$ & $\mu_0, \mu_1, \mu_2, \alpha_1$ & $\omega_0,\omega_1, \omega_2$ & $\sigma^2$ & $8\log N$ & $8\log n$ & $7\log N + \log n$ \\ 
   & $M_3$ & $\mu_0, \mu_1, \mu_2, \alpha_2$ & $\omega_0,\omega_1, \omega_2$ & $\sigma^2$ & $8\log N$ & $8\log n$ & $7\log N + \log n$ \\
   & $M_4$ & $\mu_0, \mu_1, \mu_2, \alpha_1, \alpha_2$ & $\omega_0,\omega_1, \omega_2$ & $\sigma^2$ & $9\log N$ & $9\log n$ & $8\log N + \log n$ 
   \cr \hline

   \end{tabular}
}
\end{center}
\end{table}

\end{example}

\subsection{Simulation study}

\begin{figure}
\centering\includegraphics[width=13cm]{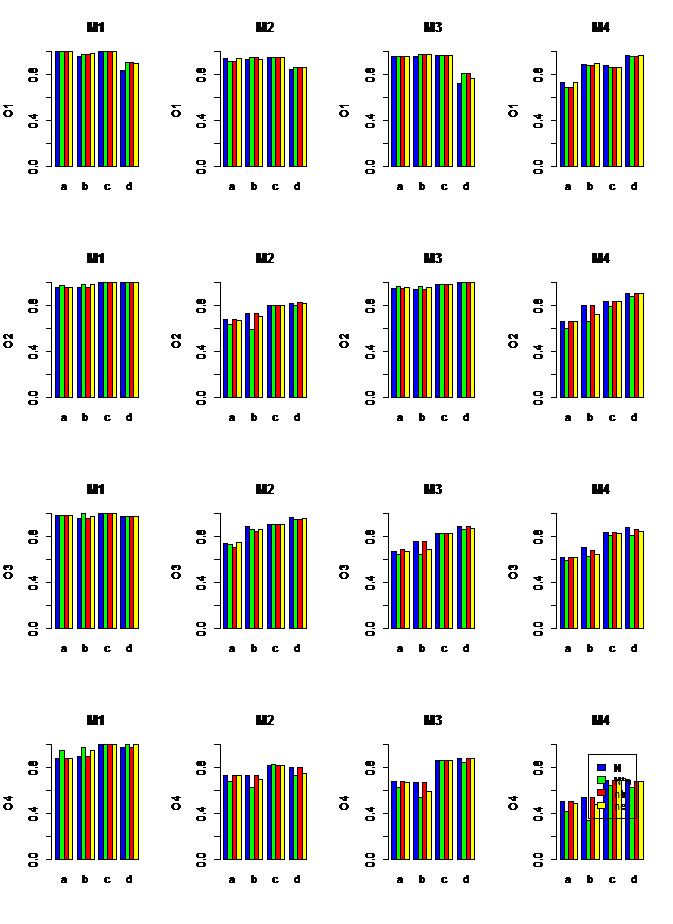}
\caption{Frequency of correct selection for the four BIC versions: BIC$_N$(blue), BIC$_n$(green), BIC$_{ne}$(yellow) and BIC$_h$(red) under different designs $a(N=20, n_{sub}=5), b(N=20, n_{sub}=100), c(N=100, n_{sub}=5), d(N=100, n_{sub}=100)$.}
\label{fig:numericaltable}
\end{figure}

We have four different designs as the number of subjects, $N=20$ or 100, and the number of observations per subject, $n_{sub}=5$ or 100, vary. For each of these 64 models, the involved parameters and variables are generated as follows: 
\begin{itemize}
    \item The $n$ observation points in the design matrix, $x_{i1}, ..., x_{in}$ were equally spaced in the interval $[0,10]$.
    \item The residual error was fixed as $\sigma^2=1$.
    \item $c_i \sim N(0,1)$.
    \item $\mu_0 \sim N(0.01, 1)$.
    \item $\mu_1 \sim N(0.005,1)$.
    \item $\mu_2 \sim N(0.0025, 1)$.
    \item $\alpha_1, \alpha_2 \sim N(0.01,1)$.
    \item $\omega^2_m \sim U[0.01,1.01], 0 \leq m \leq 2$.
\end{itemize}

Frequency of correct selection for the four BIC versions: BIC$_N$(blue), BIC$_n$(green), BIC$_{ne}$(yellow) and BIC$_h$(red) under different designs $a(N=20, n_{sub}=5), b(N=20, n_{sub}=100), c(N=100, n_{sub}=5), d(N=100, n_{sub}=100)$ are shown in Figure \ref{fig:numericaltable}. We could see that the two new BICs give an overall better selection procedure under different model selection problems.

\section{Conclusions and Future Work}
The deduction of the BIC formula in this chapter tells us that BIC is based on the assumption that the observations are independent,  identically  distributed (i.i.d.). When the real-world data does not satisfy this assumption, using BIC could be questionable since the sample size $n$ is not well defined. To apply BIC in such non-iid settings, like linear mixed model for clustered data, we define a new BIC, denoted as BIC$_{n_e}$ in our project, using the effective sample size $n_e$. The effective sample size of an estimator is defined as a function of the inverse of the information matrix, which would give the same precision as if we sample $n_{e}$ independent individuals.

Simulation study is conducted to compare the  performance of BIC$_{n_e}$ with two widely used BIC,  BIC$_N$, and BIC$_n$, and one innovation BIC defined in \cite{BICh} in which the penalty term is defined as a hybrid of the penalties in the classical BIC at two extreme cases. Using a simple linear
mixed-effects model, we have found that the performances of BIC$_N$ and BIC$_n$ differ a lot for different covariance structures. BIC$_{n_e}$ and BIC$_h$ behave as the best of the two standard BIC, whatever the random structure of the model. Moreover, our BIC$_{n_e}$ is easier to apply than BIC$_h$ since we do not require the structure of the covariance. Thus, BIC$_{n_e}$ has a more general assumption when we apply it to real-world data.

Our ongoing work focus on generalize our BIC$_{n_e}$ to more general cases besides linear mixed effect model. Non-linear mixed effect models and other more complex models are also widely used in practice. How to implement simulations on these data will be discussed in the future.


\end{document}